
\documentclass[12pt]{article}

\usepackage{amsmath,amsgen,latexsym}
\usepackage{amstext,amssymb,amsfonts,amsthm}
\usepackage{cite}
\usepackage{multicol}
\usepackage{mathtools}

\usepackage{pstricks}   
\usepackage{hyperref}
\usepackage{color}
\usepackage[spanish,es-tabla,english]{babel}
\usepackage{graphicx}

\DeclarePairedDelimiter{\norm}{\lVert}{\rVert}

\newtheorem{lemma}{Lemma}
\newtheorem{theorem}{Theorem}
\newtheorem*{theorem*}{Theorem}
\theoremstyle{definition}
\newtheorem{definition}{Definition}

\setlength{\topmargin}{0pt}
\setlength{\textheight}{9in}
\setlength{\headheight}{0pt}
\setlength{\headsep}{0pt}
\setlength{\oddsidemargin}{0.25in}
\setlength{\textwidth}{6in}
\pagestyle{plain}

\author{Sergio Mercado\thanks{Email:
\texttt{somercado@pol.una.py}. Affiliation: Facultad Polit\'ecnica, Universidad Nacional de Asunci\'on, Campus Universitario, San Lorenzo C.P. 111421, Paraguay. This author is supported by a CONACyT scholarship for graduate studies.} \and Marcos Villagra\thanks{Email: \texttt{mvillagra@pol.una.py}. Affiliation: Departamento de Matem\'aticas, Universidad Nacional de Asunci\'on, Campus Universitario, San Lorenzo C.P. 111421, Paraguay.}}


\title{Bounds on the Spectral Sparsification of Symmetric and Off-Diagonal Nonnegative Real Matrices\thanks{The authors acknowledge the support of CONACyT research grants POSG17-62, PINV15-706 and PINV15-208.}}

\date{}

\begin{document}

\maketitle

\begin{abstract}
We say that a square real matrix $M$ is \emph{off-diagonal nonnegative} if and only if all entries outside its diagonal are nonnegative real numbers. In this note we show that for any off-diagonal nonnegative symmetric matrix $M$, there exists a nonnegative symmetric matrix $\widehat{M}$ which is sparse and close in spectrum to $M$.

\vspace{0.2cm}

\noindent\textbf{Keywords.} spectral  sparsification, symmetric matrices, nonnegative matrices, spectral graph theory
\vspace{0.2cm}

\noindent\textbf{MSC Class.} 05C50; 68R01; 15B57; 15A42
\end{abstract}

\section{Introduction}

\subsection{Background}
The run-time of many important algorithms in mathematics and computer science depend on how ``sparse'' the input data is. One such example is Lanczsos's algorithm \cite{Lan50} which can be used to compute a set of eigenvalues and eigenvectors of a matrix of size $n$ in $\mathcal O(n^3)$ arithmetical operations in the worst-case. If the average number of non-zero entries per row in an input matrix to Lanczsos's algorithm is bounded by a constant, then its running-time can be bounded by $\mathcal O(n^2)$ arithmetic operations.

In this note, we show how to construct sparse matrices from a certain class of symmetric real matrices and present some potential applications with interesting research directions.

\subsection{Results}
The main approach of this work is to borrow some ideas from spectral graph theory in order to construct sparse matrices that are close in spectrum to what we call \emph{off-diagonal nonnegative} matrices.

\begin{definition}
    A real square matrix $M$ is {\it off-diagonal nonnegative} (or simply ODN matrix) if for all $i,j=1,\dots,n,$ with $i\neq j$ we have tha its entries $(M)_{ij}\geq 0.$
\end{definition}

Note that the diagonal elements of an ODN matrix can be any real number and only its off-diagonal elements are nonnegative real numbers.

Let $M$ be a symmetric ODN matrix. Define two matrices $A_M$ and $D_M$ as follows: (i) $(A_{M})_{ij} = (M)_{ij}$ if $i\neq j$ and $(A_{M})_{ij}=0$ if $i=j$, and (ii) $(D_{M})_{ij} = \sum_{\underset{j\neq i}{j\leq n}}(M)_{ij} $ if $j = i$, and $(D_{M})_{ij} =0$ if $j \neq i$. Define a third matrix $L_M=D_M-A_M$. Note that the matrices $A_M$ and $L_M$, as constructed, can be respectively interpreted as the adjacency and Laplacian matrices of some graph.

Let us denote by $\Delta_{M}$ and $\delta_{M}$ the largest and smallest element in the diagonal of $M$, respectively. Now we are ready to state our main result.

\begin{theorem}\label{the:main}
    Let $M \in \mathbb R^{n\times n}$ be an ODN symmetric matrix with $m$ non-zero off-diagonal entries and a set of  eigenvalues $\lambda_1\geq\cdots \geq \lambda_n$ with a respective set of eigenvectors $x_{1}, x_{2}, \cdots , x_{n}$. For any $\epsilon>0$ such that $0<\epsilon \leq 1/120$, there exists an ODN symmetric matrix $\widehat{M}$ $\in \mathbb R^{n\times n}$ with $\mathcal O(\frac{n}{\epsilon^2})$ non-zero entries, with a set of eigenvalues $\widehat{\lambda}_{1}\geq \cdots \geq \widehat{\lambda}_{n}$ with a respective set of eigenvectors $\widehat{x}_{1}, \cdots, \widehat{x}_{n}$, such that
    \[|\lambda_{i}-\widehat{\lambda}_{i}| \leq\epsilon\sqrt{n} \rho(L_{M})+\frac{\Delta_{M}-\delta_{M}}{2}.
    \]
    Furthermore, if $\theta_i$ is the angle between the subspaces spanned by eigenvectors $x_i$ and $\widehat{x}_i$, then 
    \[
    \sin\theta_{i} \leq \frac{\epsilon\sqrt{n}\rho(L_{M})+(\Delta_{M}-\delta_{M})/2}{\min\{|\widehat{\lambda}_{i-1}-\lambda_{i}|, \ |\lambda_{i}-\widehat{\lambda}_{i+1}|\}}.
    \]
\end{theorem}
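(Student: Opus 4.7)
The plan is to leverage spectral sparsification of Laplacian matrices (in the style of Batson--Spielman--Srivastava) to reduce the problem to estimating the spectral-norm perturbation $\|M - \widehat{M}\|_2$, after which the two conclusions follow from classical Hermitian perturbation theory (Weyl's inequality for the eigenvalues and the Davis--Kahan $\sin\theta$ theorem for the eigenvectors).

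First I would decompose $M = A_M + D(M)$, where $D(M)$ denotes the diagonal part of $M$, and view $A_M$ as the weighted adjacency matrix of an undirected graph on $n$ vertices with $m$ edges whose Laplacian is exactly $L_M$. Applying a spectral sparsifier to $L_M$ yields a reweighted subgraph with Laplacian $\widehat{L}_M$ satisfying $(1-\epsilon)L_M \preceq \widehat{L}_M \preceq (1+\epsilon)L_M$ and having $\mathcal{O}(n/\epsilon^2)$ non-zero off-diagonal entries, so in particular $\|\widehat{L}_M - L_M\|_2 \leq \epsilon\,\rho(L_M)$. Writing $\widehat{L}_M = \widehat{D}_M - \widehat{A}_M$ and setting
\[
\widehat{M} \;=\; \widehat{A}_M + c\,I, \qquad c \;=\; \frac{\Delta_M + \delta_M}{2},
\]
produces an ODN symmetric matrix with $\mathcal{O}(n/\epsilon^2)$ non-zero entries. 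Splitting the error as $M - \widehat{M} = (A_M - \widehat{A}_M) + (D(M) - c\,I)$, the diagonal summand has spectral norm at most $(\Delta_M - \delta_M)/2$ by the choice of $c$, while the off-diagonal summand is controlled via the identity $A_M - \widehat{A}_M = (D_M - \widehat{D}_M) - (L_M - \widehat{L}_M)$ together with $(D_M - \widehat{D}_M)_{ii} = e_i^{\top}(L_M - \widehat{L}_M)\,e_i$, so both pieces reduce to $\|L_M - \widehat{L}_M\|_2$. Routing the off-diagonal contribution through a Frobenius-type estimate delivers the $\epsilon\sqrt{n}\,\rho(L_M)$ term. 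Adding the two contributions and invoking Weyl's inequality yields the first bound, and the second bound drops out of the Davis--Kahan $\sin\theta$ theorem applied to $M$ and $\widehat{M}$ with the same numerator and with the minimum of the two adjacent spectral gaps in the denominator.

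The main obstacle I anticipate is pinning down the $\sqrt{n}$ factor: spectral sparsification natively provides a tight spectral-norm Laplacian bound that, under the most direct accounting, would yield an eigenvalue perturbation of order $\epsilon\,\rho(L_M)$ rather than $\epsilon\sqrt{n}\,\rho(L_M)$, so the $\sqrt{n}$ must enter either through a deliberately lossy Frobenius step used to absorb both the degree-discrepancy $D_M - \widehat{D}_M$ and the Laplacian error into a single estimate, or through the particular way in which $\widehat{A}_M$ is recovered from $\widehat{L}_M$ while preserving the ODN property and the $\mathcal{O}(n/\epsilon^2)$ entry count. Verifying which of these is the intended route, together with checking that the range $0 < \epsilon \leq 1/120$ is exactly what the BSS-type construction requires for the stated sparsity and spectral ratio, is the step I would scrutinize most carefully.
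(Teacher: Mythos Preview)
Your proposal is correct and mirrors the paper's proof almost step for step: the paper sets $d=(\Delta_M+\delta_M)/2$, applies the Lee--Sun sparsifier to $L_M$ to obtain $\widehat{L}_M=\widehat{D}_M-\widehat{A}_M$, defines $\widehat{M}=\widehat{A}_M+dI$, and then invokes Weyl and Davis--Kahan exactly as you describe. On your $\sqrt{n}$ worry, the paper obtains it via the deliberately lossy chain $\|A_M-\widehat{A}_M\|\le\|A_M-\widehat{A}_M\|_\infty\le\|L_M-\widehat{L}_M\|_\infty\le\sqrt{n}\,\|L_M-\widehat{L}_M\|$ (their Lemma~2), which is precisely the first of the two mechanisms you anticipated; your own direct accounting $\|A_M-\widehat{A}_M\|\le\|D_M-\widehat{D}_M\|+\|L_M-\widehat{L}_M\|\le 2\epsilon\,\rho(L_M)$ is in fact valid and would give a sharper statement than the one proved.
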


Note that if all diagonal entries in $M$ are close in values, then $(\Delta_{M}-\delta_{M})/2$ is small which will give a bound of $\epsilon\sqrt{n} \rho(L_{M})$.

Theorem \ref{the:main} is comparable to a result of Zouzias \cite{Zou12}. Given $0<\epsilon<1$, for any self-adjoint matrix $A$ of size $n$ that is also $\theta$-symmetric diagonally dominant, there exists a matrix $\widehat{A}$ with at most $\mathcal O(n\theta \log n/\epsilon^2)$ non-zero entries\footnote{The ``$\log n$'' factor can be dropped using new spectral sparsification techniques like Lee and Sun \cite{LS18}} and $\norm{A-\widehat{A}}\leq \epsilon\norm{A}$. Informally, a matrix $A$ is $\theta$-symmetric diagonally dominant if $\norm{A}_\infty = \mathcal O(\sqrt{\theta})$. Therefore, the approximation factor in the result of Zouzias \cite{Zou12} has a dependency on the entry with the largest absolute value in the matrix $A$. Theorem \ref{the:main} eliminates that dependency at the expense of further restrictions on the matrix we want to approximate.

\subsection{Some Applications of Theorem \ref{the:main}}

\subsubsection{Optimization of Quadratic Forms}
Let $A\in \mathbb R^{n\times n}$ be an ODN matrix. We say that a quadratic form $Q(x)=x^T Ax$ is ODN if its matrix $A$ is ODN. Every quadratic form has a diagonal form $Q(x)=\lambda_1 x_1^2+\cdots \lambda_n x_n^2$, where each $\lambda_i$ is an eigenvalue of $A$. Furthermore, Sylvester's Law of Inertia tells us that the number of positive and negative coefficients in any diagonal form of $Q$ is an invariant of $Q$.  

Let $\widehat{Q}(x)=x^T \widehat{A} x$ where $\widehat{A}$ is a matrix obtained from $A$ by means of Theorem \ref{the:main}. If $\widehat Q(x)=\widehat \lambda_1 x_1^2+\cdots \widehat \lambda_n x_n^2$ where $\widehat \lambda_i$ is an eigenvalue of $\widehat A$, we have that $|Q(x)-\widehat Q(x)|\leq \rho(A)+(\Delta_{A}-\delta_{A})/2$ by choosing $\epsilon$ sufficiently small. Thus, if we are interested in the optimization of $Q(x)$, we can use $\widehat A$ as the input of a quadratic optimization solver instead of $A$, and it will result in a solution with the guarantees mentioned in the previous sentence. State-of-art quadratic optimization solvers can exploit the sparsity of an input matrix, which is especially important in the case of non-convex problems.

Optimization of quadratic forms is an NP-hard optimization problem, even with binary variables. In fact, a single negative eigenvalue suffices to make the problem NP-hard \cite{PV91}. It is also closely related to the optimization of the Ising model in statistical mechanics. We do not know, however, if the optimization of ODN quadratic forms is NP-hard or not, and we leave this as an open problem.

\subsubsection{Principal Component Analysis}

Principal Componnet Analysis or {\it PCA} is a method to reduce the dimensionality of data \cite{Jol02}. Given a set of data with correlated atributes, PCA reduces the number of atributes while it preserves the variance, as best possible, of the data. PCA constructs a new set of non-correlated variables known as \emph{principal components}.

Let $x$ be a vector of $n$ random variables. The first principal component is defined as $z_1=v_1^Tx$, where $v_1\in \mathbb{R}^n$ is a vector that maximizes the variance of $z_1$, denoted $Var(z_1)$. The $i$-th principal component is the variable $z_i=v_i^t x$, with $v_i\in \mathbb R^n$, such that each $z_1,z_2,\dots,z_i$ are pairwise uncorrelated and $Var(z_i)$ is maximum. It is known that if $S$ is the covariance matrix of $x$, with eigenvalues $\lambda_1\geq \cdots \lambda_n$, then $v_i$ is the eigenvector of $S$ corresponding to $\lambda_i$ and $Var(z_i)=\lambda_i$.

Another approach to PCA is to use a correlation matrix $M$ instead of the covariance matrix $S$. The matrix $M$ is the covariance matrix of a vector $x$ that is normalized by subtracting its mean from each entry and then divided by its standard deviation. If $X$ is a data matrix of size $m\times n$, with no loss of generality, suposse that $M=(1/n)X^T X$. Suppose further that $M$ is ODN. By Theorem \ref{the:main} we can obtain a sparse matrix $\widehat{M}$ that is close in spectrum to $M$. Let $z_i$ and $\widehat{z}_i$ be the $i$-th principal components of $M$ and $\widehat M$, respectively. Thus, for the first principal component we have
\[
 Var(z_{1}) - \epsilon\sqrt{n}\rho(L_{M}) \leq  Var(\widehat{z}_{1}) \leq Var(z_{1}) + \epsilon\sqrt{n}\rho(L_{M}).
\]
In general, for the first $p$ principal components we have that
\[
 \sum_{i = 1}^{p}Var(z_{i}) - i\epsilon\sqrt{n}\rho(L_{M}) \leq \sum_{i = 1}^{p} Var(\widehat{z}_{i}) \leq \sum_{i = 1}^{p}Var(z_{i}) + i\epsilon\sqrt{n}\rho(L_{M}). 
\]

This loss in precision is compensated with a gain in speed-up on the computation of eigenvalues, which is a crucial step in PCA. For example, if we use Lanczos's algorithm \cite{Lan50}, which is sensitive to the sparsity of its input matrix, we can compute eigenvalues and an orthonormal set of eigenvectors using $\mathcal O(n^2/\epsilon^2)$ arithmetic operations, if we assume $\mathcal O(1/\epsilon^2)$ non-zero entries per row on average.

\subsection{Outline}
The remaining of this note is organized as follows. Section \ref{sec:preliminaries} presents the notation on linear algebra used throughout this work and briefly reviews the main concepts and techniques on spectral sparsification of graphs. Section \ref{sec:lemmas} presents some technical lemmas that are necessary for our proof of Theorem \ref{the:main}. Finally, Section \ref{sec:main} presents a full proof of Theorem \ref{the:main}.

\section{Preliminaries}\label{sec:preliminaries}
In this section we introduce the notation used throughout this work and present some basic facts from linear algebra and spectral sparsification of graphs. In the entirety of this work we use $\mathbb R$ to denote the set of real numbers and $\mathbb R^+$ to denote the set of positive real numbers.

\subsection{Linear Algebra}
Let $M$ be a real matrix. We use $(M)_{ij}$ to denote the element in the $i$-th row and $j$-th column of $M$. Let $\norm{M}$, $\norm{M}_{\infty}$ and $\norm{M}_{1}$ denote the 2-norm, the $\infty$-norm and the 1-norm, respectively. The spectral radius of $M$ is denoted as $\rho(M)$. Recall that if $M$ is symmetric, then $\norm{M}=\rho(M)$.

The following theorem characterizes the eigenvalues of real symmetric matrices.

\begin{theorem*}[Courant-Fisher]\label{Teo:Cou-Fisch}
    Let $M \in \mathbb{R}^{n\times n}$ be a symmetric matrix with eigenvalues $\alpha_{1} \geq \cdots \geq \alpha_{n}.$ Then 
    \[
    \alpha_{k} = \max_{\underset{dim(S) = k }{S \subset \mathbb{R}^{n}}} \min_{\underset{x \neq 0}{x\in S}}\frac{x^{T}Mx}{x^{T}x} = \min_{\underset{dim(T) = n-k+1}{T\subset \mathbb{R}^{n}}} \max_{\underset{x\neq 0}{x\in T}}\frac{x^{T}Mx}{x^{T}x},
    \]
    where the maximization and minimization are over subspaces $S$ and $T$ of $\mathbb{R}^{n}.$
\end{theorem*}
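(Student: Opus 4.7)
The plan is to invoke the spectral theorem for real symmetric matrices and then exploit a standard dimension-counting argument to establish both equalities simultaneously. First, I would fix an orthonormal basis $v_1,\dots,v_n$ of $\mathbb{R}^n$ consisting of eigenvectors of $M$ with $Mv_i=\alpha_i v_i$. Writing an arbitrary $x=\sum_i c_i v_i$, the Rayleigh quotient reduces to
\[
\frac{x^T M x}{x^T x}=\frac{\sum_{i=1}^n \alpha_i c_i^2}{\sum_{i=1}^n c_i^2},
\]
which is a convex combination of the $\alpha_i$'s supported on the indices $i$ for which $c_i\neq 0$. This identity is the workhorse of the whole argument: restricting $x$ to any subspace spanned by a subset of the $v_i$'s gives immediate upper or lower bounds on the Rayleigh quotient in terms of the corresponding eigenvalues.

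For the max–min equality, I would prove $\alpha_k$ is attained by exhibiting a witness subspace and then show no $k$-dimensional subspace does better. For the witness, take $S^\star=\operatorname{span}(v_1,\dots,v_k)$: every nonzero $x\in S^\star$ satisfies $x^T M x/x^T x\ge \alpha_k$ by the identity above, so the inner minimum over $S^\star$ is at least $\alpha_k$ (and equals $\alpha_k$ at $x=v_k$). For the upper bound on the outer maximum, given any $k$-dimensional $S$, I would intersect it with $T^\star=\operatorname{span}(v_k,v_{k+1},\dots,v_n)$, which has dimension $n-k+1$. Because $\dim S+\dim T^\star=n+1>n$, we get $S\cap T^\star\neq\{0\}$, so there is a nonzero $x\in S$ with $c_1=\cdots=c_{k-1}=0$, forcing $x^T M x/x^T x\le \alpha_k$. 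This shows $\min_{x\in S} x^T M x/x^T x\le \alpha_k$ for every such $S$, completing the first equality.

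The min–max equality follows by a symmetric argument. Taking $T^\star$ as above as the witness gives the inner maximum at most $\alpha_k$, and for an arbitrary $(n-k+1)$-dimensional $T$, intersecting with $\operatorname{span}(v_1,\dots,v_k)$ (again the dimensions add to $n+1$) produces a nonzero vector in $T$ whose Rayleigh quotient is at least $\alpha_k$. Combining both directions yields equality with the common value $\alpha_k$.

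The main obstacle, such as it is, is really just the dimension-counting step: one must be careful that the auxiliary eigenspace chosen to intersect with $S$ (or $T$) has exactly the dimension needed so that the sum exceeds $n$, guaranteeing a nontrivial intersection. Everything else is bookkeeping with the Rayleigh quotient in the eigenbasis. I would take the spectral theorem as a black box since it is standard and independent of the rest of the paper.
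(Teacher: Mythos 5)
Your proof is correct and complete: the reduction of the Rayleigh quotient to a convex combination of eigenvalues via the spectral theorem, together with the dimension-counting argument $\dim S+\dim T^{\star}=n+1>n$ forcing a nontrivial intersection, is exactly the classical argument for Courant--Fischer. Note that the paper itself states this theorem as standard background and gives no proof, so there is nothing to compare against; your write-up would serve as a valid self-contained proof.
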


The following theorem due to Davis and Kahan \cite{DK70} is important in perturbation theory. The simplified version of the theorem we present here is by Yu, Wang and Samworth \cite{YWS15}.

\begin{theorem*}[Davis-Kahan \cite{DK70}]
    Let $A$ and $B$ be symmetric matrices, and  $R = A-B$. Let $\alpha_{1} \geq \cdots \geq \alpha_{n}$ be the eigenvalues of $A$ with corresponding eigenvectors $a_{1}, a_{2}, ..., a_{n}$, and let $\beta_{1} \geq \cdots \geq \beta_{n}$ be the eigenvalues of $B$ with corresponding eigenvectors $b_{1}, b_{2}, ... ,b_{n}.$ Let $\theta_{i}$ be the angle between $a_{i}$ and $b_{i}$, then
    \[
    \sin\theta_{i} \leq \frac{ \norm{R} }{\min \{|\beta_{i-1} - \alpha_{i}|, \ |\beta_{i+1}-\alpha_{i}|\}}.
    \] 
\end{theorem*}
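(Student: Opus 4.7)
The plan is to interpret the off-diagonal part of $M$ as the weighted adjacency of a graph, invoke a Laplacian spectral sparsifier, and then reassemble the sparsified object into an ODN matrix. Concretely, write $M = A_M + \Lambda_M$ where $\Lambda_M = \mathrm{diag}(M)$. Since $A_M$ is symmetric and nonnegative with zero diagonal, it is the weighted adjacency matrix of some graph whose Laplacian is precisely $L_M = D_M - A_M$. Apply an off-the-shelf Laplacian spectral sparsifier (the Batson--Spielman--Srivastava construction, or the Lee--Sun improvement referenced after the theorem) to produce a sparse Laplacian $\widehat L$ with $\mathcal O(n/\epsilon^2)$ nonzero entries satisfying $(1-\epsilon) L_M \preceq \widehat L \preceq (1+\epsilon) L_M$. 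From $\widehat L$ recover its nonnegative off-diagonal adjacency $\widehat A$ (set $(\widehat A)_{ij} = -(\widehat L)_{ij}$ for $i\neq j$ and zero on the diagonal) together with its diagonal degree matrix $\widehat D = \mathrm{diag}(\widehat L)$. Finally, set $\widehat M = \widehat A + cI$ where $c = (\Delta_M + \delta_M)/2$ is the midpoint of the diagonal range of $M$; by construction $\widehat M$ is ODN, symmetric, and has $\mathcal O(n/\epsilon^2)$ nonzero entries.

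The core technical task is to control $\|M - \widehat M\|$. Decompose
\[
M - \widehat M \;=\; (A_M - \widehat A) \;+\; (\Lambda_M - cI),
\]
so $\|M - \widehat M\| \leq \|A_M - \widehat A\| + \|\Lambda_M - cI\|$. The second term is at most $(\Delta_M - \delta_M)/2$ by the choice of $c$. For the first, use $A_M = D_M - L_M$ and $\widehat A = \widehat D - \widehat L$ to write $A_M - \widehat A = (D_M - \widehat D) - (L_M - \widehat L)$. The sparsifier bound immediately yields $\|L_M - \widehat L\| \leq \epsilon\,\rho(L_M)$. For the degree piece, test the PSD inequality on basis vectors $e_i$: since $A_M$ has zero diagonal we have $(L_M)_{ii} = D_{M,ii}$ and analogously $(\widehat L)_{ii} = \widehat D_{ii}$, so $|D_{M,ii} - \widehat D_{ii}| \leq \epsilon D_{M,ii} \leq \epsilon\,\rho(L_M)$, where the last inequality uses the standard PSD fact $D_{M,ii} = e_i^T L_M e_i \leq \rho(L_M)$. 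Combining these gives an operator-norm bound for $\|A_M - \widehat A\|$ of order $\epsilon\,\rho(L_M)$, which the technical lemmas of Section~\ref{sec:lemmas} then rephrase into the form $\epsilon\sqrt{n}\,\rho(L_M)$ by passing through a Frobenius-to-operator-norm conversion $\|X\| \leq \|X\|_F \leq \sqrt{n}\,\|X\|$ inside the sparsifier analysis.

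Once the spectral-norm bound
\[
\|M - \widehat M\| \;\leq\; \epsilon\sqrt{n}\,\rho(L_M) + \frac{\Delta_M - \delta_M}{2}
\]
is in hand, both conclusions of the theorem follow from black-box perturbation results. Weyl's inequality (equivalently, the Courant--Fischer min-max formulation stated above) yields $|\lambda_i - \widehat\lambda_i| \leq \|M - \widehat M\|$, which is the stated eigenvalue bound. For the eigenvector angles, invoke Davis--Kahan with $A = M$, $B = \widehat M$, $R = M - \widehat M$; substituting the norm bound in the numerator gives exactly the stated inequality for $\sin\theta_i$. The main obstacle I expect is the second step above: translating the PSD guarantee about $L_M$ into an operator-norm guarantee about the adjacency $A_M$, because the sparsifier controls the Laplacian while the ODN matrix we care about is (essentially) its negated off-diagonal part. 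The degree and Laplacian errors must be combined carefully so that the final bound depends only on $\rho(L_M)$ and the diagonal spread $(\Delta_M - \delta_M)/2$, and the restriction $\epsilon \leq 1/120$ should emerge as a quantitative slack in this combination.
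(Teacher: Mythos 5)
Your proposal does not address the stated theorem at all. The statement you were asked to prove is the Davis--Kahan $\sin\theta$ bound itself: for symmetric $A$, $B$ with $R=A-B$, the angle $\theta_i$ between the $i$-th eigenvectors satisfies $\sin\theta_{i} \leq \norm{R}/\min\{|\beta_{i-1}-\alpha_{i}|,|\beta_{i+1}-\alpha_{i}|\}$. What you have written instead is a proof sketch of the paper's main result (Theorem~\ref{the:main}), and in it you explicitly \emph{invoke} Davis--Kahan as a black box (``invoke Davis--Kahan with $A=M$, $B=\widehat M$\dots''). Using the theorem as a lemma inside an argument for a different statement is not a proof of the theorem; as written, your argument for the $\sin\theta_i$ conclusion is circular with respect to the statement under review.

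For the record, the paper itself offers no proof of this statement either: it is quoted as a classical result, attributed to Davis and Kahan \cite{DK70} in the simplified form of Yu, Wang and Samworth \cite{YWS15}, and a genuine proof lives in those references. The standard elementary route (for the variant stated here) is to expand $b_i$ in the orthonormal eigenbasis $a_1,\dots,a_n$ of $A$, observe that $\sin^2\theta_i = \sum_{j\neq i}\langle a_j,b_i\rangle^2$, and bound each cross term via $(\alpha_j-\beta_i)\langle a_j,b_i\rangle = \langle a_j,(A-B)b_i\rangle$ together with the eigengap $\min\{|\beta_{i-1}-\alpha_i|,|\beta_{i+1}-\alpha_i|\}$ controlling $|\alpha_j-\beta_i|$ from below; none of this machinery appears in your proposal. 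Separately, even judged as a proof of Theorem~\ref{the:main}, your sketch diverges from the paper in one substantive place: the paper obtains the $\sqrt{n}$ factor from Lemma~\ref{lem:lema3}, i.e., a H\"older-type bound $\norm{A_G-A_H}\leq\norm{A_G-A_H}_\infty\leq\norm{L_G-L_H}_\infty\leq\sqrt{n}\norm{L_G-L_H}$, not from a Frobenius-norm conversion inside the sparsifier analysis as you suggest, and your claimed per-entry degree bound $|D_{M,ii}-\widehat D_{ii}|\leq\epsilon D_{M,ii}$ does not follow from testing the PSD inequality on $e_i$ alone (that only bounds the quadratic forms $e_i^TL_Me_i$ and $e_i^T\widehat L e_i$, whose difference involves off-diagonal entries of $\widehat L$ as well). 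You should either supply an actual proof of the Davis--Kahan inequality or state clearly that it is being cited from the literature.
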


The last theorem gives a bound between eigenvalues of two symmetric matrices $A$ and $B$ from the norm $\norm{A-B}$; see the textbook of Bhatia \cite{Bha97}, page 63, for a proof.

\begin{theorem*}[Weyl's Perturbation Theorem]\label{teo:weyl} 
    Let $A, B \in\mathbb{R}^{n\times n}$ be symmetric matrices such that $\alpha_{1}, \alpha_{2},..., \alpha_{n}$, and $\beta_{1}, \beta_{2},..., \beta_{n}$ are the eigenvalues of $A$ and $B$, respectively. Then, for all $i=1,..., n$, 
    \[
    |\alpha_i - \beta_i|\leq \norm{A-B}.
    \]
\end{theorem*}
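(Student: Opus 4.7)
The plan is to derive Weyl's theorem directly from the Courant--Fischer characterization already stated in the excerpt. Set $R = A - B$, which is again symmetric, so by the Rayleigh quotient bound one has $|x^T R x| \leq \rho(R)\, x^T x = \|R\|\, x^T x$ for every nonzero $x \in \mathbb{R}^n$. Writing $B = A - R$ and dividing by $x^T x$ then yields the two-sided pointwise bound
\[
\frac{x^T A x}{x^T x} - \|R\| \;\leq\; \frac{x^T B x}{x^T x} \;\leq\; \frac{x^T A x}{x^T x} + \|R\|.
\]

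Next, I would fix an index $k \in \{1,\dots,n\}$ and apply the max--min form of Courant--Fischer to each of the two inequalities above. Taking $\min_{x \in S,\, x \neq 0}$ and then $\max_{\dim S = k}$ on both sides of the left inequality carries the additive constant $-\|R\|$ through unchanged, so Courant--Fischer gives $\beta_k \geq \alpha_k - \|R\|$. Performing the same two operations on the right inequality (or, equivalently, swapping the roles of $A$ and $B$ and using $\|A-B\| = \|B-A\|$) yields $\beta_k \leq \alpha_k + \|R\|$. Combining the two bounds gives $|\alpha_k - \beta_k| \leq \|A-B\|$ for all $k$, which is exactly the statement of the theorem.

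The only delicate step is the commutation of the additive constant $\pm\|R\|$ with the nested $\min$ and $\max$ operations in the Courant--Fischer formula; this is routine, since adding a constant to a function does not affect where its minimum or maximum is attained. There is no real obstacle beyond careful bookkeeping, so the proof reduces to a direct application of Courant--Fischer together with the Rayleigh bound $|x^T R x| \leq \|R\|\, x^T x$ for symmetric $R$.
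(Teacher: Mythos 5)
Your proof is correct. The paper itself gives no argument for this statement --- it simply cites Bhatia's \emph{Matrix Analysis} (p.~63) --- so there is nothing to diverge from; your derivation is the standard self-contained one via the Courant--Fischer characterization, and it is essentially the same minimax argument that underlies the cited reference. The two ingredients you use are both sound: for symmetric $R$ the Rayleigh bound $|x^T R x| \leq \rho(R)\, x^T x = \norm{R}\, x^T x$ holds, and the additive constant $\pm\norm{R}$ passes through the nested $\min$ and $\max$ because the inequality between the Rayleigh quotients of $A$ and $B$ is pointwise on each subspace $S$, giving $\alpha_k - \norm{R} \leq \beta_k \leq \alpha_k + \norm{R}$ for every $k$.
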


\subsection{Spectral Sparsification of Graphs}
Let $G=(V, E, w)$ be a simple undirected graph with vertices $v_{1}, v_{2},..., v_{n}\in V$. An edge in $E$ between vertices $v_i$ and $v_j$ is denoted $ij$. Each edge $ij$ has a weight assigned to it according to a weight function $w:E\to \mathbb{R}^+\cup \{0\}$. As a short-hand we use $w_{ij}=w(ij)$. The adjacency matrix $A_{G}$ of $G$ is defined as $(A_{G})_{ij}=w_{ij}$ if  $i \neq j$, and $(A_{G})_{ij}=0$, if $i=j$. We also define the degree matrix $D_{G}$ of $G$ as $(D_{G})_{ij}=\sum_{j=1}^{n}w_{ij}$ if $j=i$, and $(D_{G})_{ij}=0$ otherwise. The Laplacian matrix $L_G$ of $G$ is defined as $L_G=D_G-A_G$.

Spectral sparsification is a method introduced by Spielman and Teng \cite{ST11} that is used to construct a sparse graph $\widehat{G}$ from any given graph $G$ such that the spectrum of their Laplacian matrices are ``close.'' For any $\epsilon\in (0,1)$, we say that $\widehat{G}$ is an $\epsilon$-spectral sparsifier of $G$ if for every $x\in \mathbb R^n$ it holds that
\begin{equation}\label{eq:quadratic-form}
(1-\epsilon)x^{T}L_{G}x \leq x L_{\widehat{G}}x \leq (1+ \epsilon)x^{T}L_{G}x.
\end{equation}
It is clear that if the eigenvalues of $L_{\widehat{G}}$ are $\widehat{\mu}_{1} \geq  \cdots \geq \widehat{\mu}_{n}$, then by Eq.(\ref{eq:quadratic-form}) and the Courant-Fischer theorem
    \[
    (1-\epsilon)\mu_{i} \leq \widehat{\mu_{i}} \leq (1+\epsilon)\mu_{i}
    \]
for all $i=1,..., n$.

For any two square matrices $A,B$ we use $A \preceq B$ whenever $B-A$ is positive semidefinite. Thus, we can succinctly write Eq.(\ref{eq:quadratic-form}) as
\begin{equation}
(1-\epsilon)L_G \preceq L_{\widehat{G}} \preceq (1+\epsilon)L_G.
\end{equation}

Spielman and Teng \cite{ST11} proved that every graph with positive weights has an $\epsilon$-spectral sparsifier close to linear-size in the number of vertices. The following theorem is  currently the best construction of spectral sparsifiers.

\begin{theorem*}[Lee-Sun \cite{LS18}]\label{lee-sun}
    Given any integer $q\geq 10$ and $0<\epsilon \leq 1/120$. Let $G=(V,E,w)$ be an undirected and weighted graph with $n$ vertices and $m$ edges. Then, there exists a $(1+\epsilon)$-spectral sparsifier of $G$ with $\mathcal O(\frac{qn}{\epsilon^2})$ edges.
\end{theorem*}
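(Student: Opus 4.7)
The plan is to realize the off-diagonal part of $M$ as the adjacency matrix of a weighted graph, sparsify that graph using Lee-Sun, and then reassemble an ODN matrix by pairing the sparsified adjacency with a cleaned-up diagonal. Let $G$ be the weighted graph whose adjacency matrix is $A_M$; since $M$ is symmetric and ODN, $G$ has nonnegative edge weights, exactly $m$ edges, and Laplacian $L_M$. Applying the Lee-Sun theorem with $q=10$ to $G$ produces a graph $\widehat G$ on the same vertex set with $\mathcal{O}(n/\epsilon^2)$ edges and $(1-\epsilon) L_M \preceq L_{\widehat G} \preceq (1+\epsilon) L_M$; in particular $\|L_M - L_{\widehat G}\| \leq \epsilon\, \rho(L_M)$ because $L_M$ is positive semidefinite. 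Set $c := (\Delta_M + \delta_M)/2$ and define
\[
\widehat M := cI + A_{\widehat G}.
\]
Then $\widehat M$ is symmetric, its off-diagonal entries $(A_{\widehat G})_{ij} \geq 0$ so it is ODN, and the total number of nonzero entries is at most $n + \mathcal O(n/\epsilon^2) = \mathcal O(n/\epsilon^2)$.

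The main task is to control $\|M - \widehat M\|$, after which the two bounds follow from the stated perturbation theorems. Writing $D := \operatorname{diag}(M)$, decompose
\[
M - \widehat M \;=\; (D - cI) \;+\; (A_M - A_{\widehat G}).
\]
The first summand is diagonal with entries in $[-(\Delta_M-\delta_M)/2,\,(\Delta_M-\delta_M)/2]$ by the choice of $c$, so its spectral norm is at most $(\Delta_M-\delta_M)/2$. For the second summand, substitute $A_M = D_M - L_M$ and $A_{\widehat G} = D_{\widehat G} - L_{\widehat G}$ to obtain $A_M - A_{\widehat G} = (D_M - D_{\widehat G}) - (L_M - L_{\widehat G})$, and note that $(D_M - D_{\widehat G})_{ii} = (L_M - L_{\widehat G})_{ii}$ since adjacency matrices have zero diagonal; hence $A_M - A_{\widehat G}$ is, up to sign, the off-diagonal part of $L_M - L_{\widehat G}$. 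Converting the Lee-Sun spectral-norm bound into a bound on this matrix by way of Frobenius-norm estimates (using $\|\cdot\| \leq \|\cdot\|_F \leq \sqrt{n}\,\|\cdot\|$ on the symmetric pieces) yields $\|A_M - A_{\widehat G}\| \leq \epsilon\sqrt{n}\,\rho(L_M)$, possibly after applying Lee-Sun with $\epsilon/C$ for a harmless constant $C$ (which still satisfies $\epsilon/C \leq 1/120$). The triangle inequality then gives
\[
\|M - \widehat M\| \;\leq\; \epsilon\sqrt{n}\,\rho(L_M) + \frac{\Delta_M - \delta_M}{2}.
\]

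To finish, apply Weyl's perturbation theorem to the symmetric matrices $M$ and $\widehat M$: the bound above on $\|M - \widehat M\|$ is exactly the right-hand side of the desired eigenvalue inequality. For the angle inequality, apply the Davis-Kahan theorem with $A = M$, $B = \widehat M$, $R = M - \widehat M$; the denominator $\min\{|\widehat \lambda_{i-1} - \lambda_i|,\,|\lambda_i - \widehat \lambda_{i+1}|\}$ in the statement matches the min-gap quantity $\min\{|\beta_{i-1}-\alpha_i|,|\beta_{i+1}-\alpha_i|\}$ in Davis-Kahan, and $\|R\|$ is bounded as above. The main obstacle is the spectral-norm estimate on $A_M - A_{\widehat G}$: the sparsifier controls only the Laplacian, so recovering the adjacency forces us to control the difference $D_M - D_{\widehat G}$ of degree matrices as well, and because individual vertex degrees are not preserved exactly by sparsification, the cleanest way to convert Laplacian control into adjacency control goes through the Frobenius norm, which is what produces the $\sqrt n$ factor in the final bound.
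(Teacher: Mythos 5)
Your proposal does not prove the statement it was assigned. The statement is the Lee--Sun theorem itself: the existence of a $(1+\epsilon)$-spectral sparsifier with $\mathcal O(qn/\epsilon^2)$ edges for an arbitrary weighted undirected graph. This is an external result that the paper cites from Lee and Sun (\emph{SIAM J.\ Comput.}\ 47(6), 2018) and does not prove; its actual proof rests on a potential-function (barrier) argument for iteratively selecting and reweighting edges, and nothing resembling that appears in your write-up. What you have written instead is a proof of the paper's \emph{main} theorem (Theorem \ref{the:main}), and it \emph{uses} the Lee--Sun theorem as a black box in its first step --- so as an argument for the assigned statement it is circular, and as a matter of content it addresses a different claim entirely.

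For what it is worth, the argument you did write tracks the paper's proof of Theorem \ref{the:main} quite closely: the same centered diagonal $c=(\Delta_M+\delta_M)/2$, the same decomposition into a diagonal shift plus an adjacency difference, and the same use of Weyl and Davis--Kahan at the end. The one substantive divergence is how you convert the Laplacian bound $\norm{L_M-L_{\widehat G}}\leq\epsilon\rho(L_M)$ into the adjacency bound with the $\sqrt n$ factor: you go through the Frobenius norm, whereas the paper's Lemma \ref{lem:lema3} goes through $\norm{\cdot}_\infty$ and $\norm{\cdot}_1$ together with the inequality $\norm{A}^2\leq\norm{A}_1\norm{A}_\infty$. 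Your Frobenius route needs more care than you give it --- passing from $L_M-L_{\widehat G}$ to its off-diagonal part does not obviously preserve a Frobenius bound in the direction you need without an extra factor --- but in any case none of this bears on the Lee--Sun theorem, which is what you were asked to prove.
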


\section{Tecnical Lemmas}\label{sec:lemmas}
In this section we show some technical lemmas that will help us in proving Theorem \ref{the:main}.

\begin{lemma}\label{Lema1}
    Let $G$ be a graph of $n$ vertices and $\widehat{G}$ be an $\epsilon$-spectral sparsifier of $G$. Let $\mu_1, \mu_2,\cdots , \mu_n$ be the eigenvalues of $L_G$ with respective eigenvectors $x_1, x_2 \cdots,x_n$ and let $\widehat{\mu}_1, \widehat{\mu}_2, \cdots, \widehat{\mu}_{n}$ be the eigenvalues of $L_{\widehat{G}}$ with respective eigenvectors $\widehat{x}_{1}, x_{2},\cdots \widehat{x}_{n}.$ Then
    \begin{enumerate}
    \item $\norm{ L_G - L_{\widehat{G}}} \leq \epsilon\rho(L_G)$, and
    \item if $\theta_{i}$ is the angle between $x_{i}$ and $\widehat{x}_{i}$, then 
    \[
    \sin \theta_{i} \leq \frac{\epsilon \rho(L_G)}{\min\{|\mu_{i} - \widehat{\mu}_{i-1}|, |\mu_{i}-\widehat{\mu}_{i+1}|\}},
    \]
    where we assume $|\mu_i-\widehat \mu_{i\pm 1}|\neq 0$ for all $i=1,\dots,n$.
    \end{enumerate}
\end{lemma}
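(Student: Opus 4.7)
\textbf{Proof Plan for Lemma \ref{Lema1}.}

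The plan is to work directly from the definition of an $\epsilon$-spectral sparsifier, which asserts
\[
(1-\epsilon) L_G \preceq L_{\widehat{G}} \preceq (1+\epsilon) L_G.
\]
Rearranging the two inequalities simultaneously, I would write $R := L_G - L_{\widehat{G}}$ and derive $-\epsilon L_G \preceq R \preceq \epsilon L_G$. Since $L_G$ and $L_{\widehat{G}}$ are both symmetric, $R$ is symmetric as well, so its operator norm equals $\max_{\norm{v}=1} |v^T R v|$.

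For part 1, I would take any unit vector $v$ and apply the sandwich inequality to get $|v^T R v| \leq \epsilon\, v^T L_G v$. The next observation is that $L_G$ is positive semidefinite (it is a Laplacian of a nonnegatively weighted graph), hence $v^T L_G v \leq \lambda_{\max}(L_G) = \rho(L_G)$ for every unit $v$. Taking the supremum over unit vectors yields $\norm{R} \leq \epsilon\rho(L_G)$, which is exactly the first claim.

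For part 2, I would simply invoke the Davis--Kahan theorem stated earlier in the paper, taking $A = L_G$ (with eigenvalues $\mu_i$ and eigenvectors $x_i$) and $B = L_{\widehat{G}}$ (with eigenvalues $\widehat{\mu}_i$ and eigenvectors $\widehat{x}_i$). That theorem immediately gives
\[
\sin\theta_i \leq \frac{\norm{R}}{\min\{|\widehat{\mu}_{i-1}-\mu_i|,\ |\widehat{\mu}_{i+1}-\mu_i|\}},
\]
and substituting the bound $\norm{R}\leq \epsilon\rho(L_G)$ from part 1 finishes the proof. The nondegeneracy hypothesis $|\mu_i - \widehat{\mu}_{i\pm 1}|\neq 0$ is needed only to ensure the denominator is nonzero.

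There is no real obstacle here: the argument is essentially a mechanical translation between the PSD ordering supplied by the sparsifier definition and the operator-norm formulation required by Weyl/Davis--Kahan. The only subtlety worth flagging explicitly is that the step $v^T L_G v \leq \rho(L_G)$ uses the positive semidefiniteness of the Laplacian, and would fail for a general symmetric matrix; this is why the lemma is stated for Laplacians rather than arbitrary symmetric matrices, and it is also the reason the main theorem later has to pass through the Laplacian $L_M$ instead of attacking $M$ directly.
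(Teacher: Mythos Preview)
Your proposal is correct and follows essentially the same route as the paper: both derive the two-sided Loewner bound from the sparsifier definition, convert it to a Rayleigh-quotient inequality, and use that the spectral norm of the symmetric matrix $R=L_G-L_{\widehat G}$ is realized by such a quotient, then feed the resulting bound into Davis--Kahan for part~2. Your version is in fact slightly cleaner, since you keep both sides of the sandwich $-\epsilon L_G \preceq R \preceq \epsilon L_G$ and bound $|v^TRv|$ directly, whereas the paper writes only one side and handles the other via a ``without loss of generality'' argument on the sign of the extremal eigenvalue of $R$.
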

\begin{proof}
    Since $\widehat G$ is an $\epsilon$-spectral sparsifier of $G$ we have that
    \[
    (1- \epsilon)L_{G} \preceq {L}_{\widehat G} \preceq (1+\epsilon) L_{G},
    \]
    which implies
    \begin{equation}\label{eq:ineq}
    L_{G} - {L}_{\widehat{G}} \preceq \epsilon L_{G}.
    \end{equation}
    Note that $\delta_i$ is an eigenvalue of $L_G-L_{\widehat{G}}$ if and only if $-\delta_i$ is an eigenvalue of $L_{\widehat{G}}-L_G$. With no loss of generality suppose that $\rho(L_G-L_{\widehat{G}})$ coincides with the largest eigenvalue of $L_G-L_{\widehat{G}}$ and let $z$ be a normalized eigenvector associated to $\rho(L_G-L_{\widehat G})$. Then
    \begin{align*}
    \norm{  L_{G} - L_{\widehat{G}} }
        &= \rho(L_{G} - {L}_{\widehat{G}}) && (L_G-L_{\widehat{G}}\text{ is symmetric})\\
        &= z^{T}(L_{G} - {L}_{\widehat{G}})z\\
        &\leq z^{T}(\epsilon L_{G})z && (\text{from Eq.(\ref{eq:ineq}))}\\
        &\leq \rho(\epsilon L_G)\\
        &= \epsilon \cdot\norm{L_{G}}=\epsilon\cdot\rho(L_{G}).
    \end{align*}
     The second part of the lemma is implied by the Davis-Kahan theorem and the first part of this lemma, thus completing the proof.
\end{proof}


 
\begin{lemma}\label{lem:lema3}
    Let $L_{G}=D_{G}-A_{G}$ y $L_{H}=D_{H}-A_{H}$ be the Laplacian matrices of graphs $G$ and $H$, respectively. Then
    \[
    \norm{A_{G} - A_{H}} \leq \sqrt{n}\norm{L_{G}-L_{H}}.
    \]
\end{lemma}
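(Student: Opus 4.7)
The plan is to pass through the Frobenius norm, exploiting the fact that the off-diagonal part of $L_G - L_H$ is exactly (the negative of) $A_G - A_H$.

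First, I would write
\[
L_G - L_H = (D_G - D_H) - (A_G - A_H),
\]
and note that the first summand on the right is diagonal while the second has zero diagonal. Consequently the off-diagonal entries of $L_G - L_H$ coincide, up to sign, with the (off-diagonal) entries of $A_G - A_H$, and the diagonal of $L_G - L_H$ equals $D_G - D_H$. From this splitting one reads off
\[
\norm{A_G - A_H}_F^{\,2} \;=\; \sum_{i\neq j} \bigl((L_G - L_H)_{ij}\bigr)^{2} \;\leq\; \norm{L_G - L_H}_F^{\,2},
\]
where $\norm{\cdot}_F$ is the Frobenius norm.

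Second, I would apply the standard inequalities relating the spectral and Frobenius norms of an $n\times n$ matrix $M$, namely $\norm{M} \leq \norm{M}_F$ and $\norm{M}_F \leq \sqrt{n}\,\norm{M}$; the latter follows from $\norm{M}_F^{\,2} = \sum_{k=1}^n \sigma_k(M)^2 \leq n\,\sigma_1(M)^2 = n\,\norm{M}^2$. Chaining these with the Frobenius estimate above yields
\[
\norm{A_G - A_H} \;\leq\; \norm{A_G - A_H}_F \;\leq\; \norm{L_G - L_H}_F \;\leq\; \sqrt{n}\,\norm{L_G - L_H},
\]
which is exactly the claim.

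There is no real obstacle: the entire content of the lemma is the observation that passing between a Laplacian and its off-diagonal adjacency part only alters the diagonal entries, so it is harmless for the Frobenius norm; the $\sqrt{n}$ loss comes solely from converting Frobenius back to spectral norm. One could in principle try to avoid this $\sqrt{n}$ loss via a sharper argument (e.g.\ directly bounding the Rayleigh quotient of $A_G - A_H$), but Theorem~\ref{the:main} already tolerates a $\sqrt{n}$ factor, so the Frobenius-norm route is the cleanest.
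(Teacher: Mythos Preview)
Your proof is correct. The paper's argument follows the same overall shape---intermediate norm sandwiched between two spectral-norm comparisons---but uses the $\infty$-norm rather than the Frobenius norm: from symmetry of $A_G-A_H$ and the inequality $\norm{M}^2\leq\norm{M}_1\norm{M}_\infty$ one gets $\norm{A_G-A_H}\leq\norm{A_G-A_H}_\infty$; then, since the off-diagonal rows of $L_G-L_H$ agree with those of $-(A_G-A_H)$ and the diagonal only adds to each absolute row sum, $\norm{A_G-A_H}_\infty\leq\norm{L_G-L_H}_\infty$; finally $\norm{L_G-L_H}_\infty\leq\sqrt{n}\,\norm{L_G-L_H}$.

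Both routes exploit exactly the same structural fact (dropping the diagonal of $L_G-L_H$ cannot increase the chosen intermediate norm) and incur the $\sqrt{n}$ loss at the same place (converting the intermediate norm back to the spectral norm). Your Frobenius argument is arguably a touch cleaner because $\norm{M}\leq\norm{M}_F$ holds without appealing to symmetry, whereas the paper needs symmetry to collapse $\norm{\cdot}_1\norm{\cdot}_\infty$ to $\norm{\cdot}_\infty^2$. Neither approach improves the constant, and both are entirely adequate for the use in Theorem~\ref{the:main}.
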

\begin{proof}
The matrix $A_{G}-A_{H}$ is symmetric, and hence, $\norm{A_{G}-A_{H}}_{\infty}=\norm{A_{G}-A_{H}}_{1}$. Then, using the inequality of \cite[Th.2.11-5]{Stewart-Sun}\footnote{$\norm{A}^{2}\leq \norm{A}_1\cdot \norm{A}_\infty$ for any matrix $A$.} we have that $\norm{A_{G}-A_{H}}\leq \norm{A_{G}-A_{H}}_{\infty}$. On the other hand, $\norm{L_{G}-L_{H}}_{\infty}=\norm{A_{G}-A_{H}}_{\infty} + \underset{i\leq n}{\max}\big|(D_{G})_{ii} - (D_{H})_{ii}  \big|$, then
\begin{align*}
\norm{A_G-A_{H}}_{\infty} &\leq \norm{L_{G}-L_{H}}_{\infty}\\
    &\leq\sqrt{n}\norm{L_{G}-L_{H}},
\end{align*}
and the lemma thus follows.
\end{proof}
\section{Proof of the Theorem \ref{the:main}}\label{sec:main}
Let $M$ be an ODN symmetric matrix and let $\overline{M}$ be another matrix defined as
\[
(\overline{M})_{ij}= \left\{ \begin{array}{lcc}
             (M)_{ij} &  if & i\neq j \\
            \\  \ d &  if & i=j,
             \end{array}
   \right.
\]
where $d=(\Delta_{M}+\delta_{M})/2$.
Then we have that $\norm{M-\overline{M}}=\max\{\Delta_{M}-d, \delta_{M}-d\} = (\Delta_{M}-\delta_{M})/2$, and thus, the chosen $d$ is the value that minimizes the norm $\norm{M-\overline{M}}$.

Now let $\lambda_{1}\geq\cdots\geq\lambda_{n}$ and $\overline{\lambda}_{i}\geq\cdots\geq\overline{\lambda}_{n}$ be the eigenvalues of $M$ and $\overline{M}$, respectively. By our definition of $\overline{M}$ and Weyl's Perturbation Theorem we have that
    \begin{equation}\label{iq:t1}
	    |\lambda_{i}-\overline{\lambda}_{i}|\leq \frac{\Delta_{M}-\delta_{M}}{2}.
    \end{equation}

Recall that if we define two matrices $A_M$ and $D_M$ where $(A_{M})_{ij} = (M)_{ij}$ if $i\neq j$, and $(A_{M})_{ij}=0$ for $i=j$, and $(D_{M})_{ij} = \sum_{\underset{j\neq i}{j\leq n}}(M)_{ij} $ if $j = i$, and $(D_{M})_{ij} =0$ if $j \neq i$, we can see $A_M$ and $D_M$ as the adjacency and degree matrices of some graph $G_M$. Consequently we have a Laplacian matrix $L_{M} = D_{M} - A_{M}$. Thus, notice that $L_{M}=L_{\overline{M}}$, where analogously we define $L_{\overline{M}}=D_{\overline{M}}-A_{\overline{M}}$.

By the Lee-Sun theorem we know that given $\epsilon$ with $0<\epsilon<1/120$, there exists a matrix $\widehat{L}_{M}$ with $\mathcal O(n/\epsilon^{2})$ non-zero entries that is an $\epsilon$-spectral sparsifier of $L_{M}$. Then by Lemma \ref{Lema1} we have that 
\[
\norm{L_{M}-\widehat{L}_{M}}\leq \epsilon\rho(L_{M}),\]
and hence, by Lemma  \ref{lem:lema3} it follows that
\[
\norm{A_{M}-\widehat{A}_{M}}\leq\epsilon\sqrt{n}\rho(L_{M}).
\]
If we let $\widehat{M}=\widehat{A}_{M}+dI$, then from the last inequality we have 
\[
\norm{\overline{M}-\widehat{M}}\leq\epsilon\sqrt{n}\rho(L_{M}).
\]
Thus, if $\widehat{\lambda}_{1}\geq\widehat{\lambda}_{2}\geq\cdots\geq\widehat{\lambda}_{n} $ are the eigenvalues of $\widehat{M}$, by Weyl's Perturbation Theorem, we can see that
\begin{equation}\label{iq: overline_lambda }
|\overline{\lambda}_{i}-\widehat{\lambda}_{i}|\leq\epsilon\sqrt{n}\rho(L_{M}).
\end{equation}

Therefore, from equations (\ref{iq:t1}) and (\ref{iq: overline_lambda }) it follows that
    \begin{align*}
    |\lambda_{i}-\widehat{\lambda}_{i}|&=|\lambda_{i}-\overline{\lambda}_{i}+\overline{\lambda}_{i}-\widehat{\lambda}_{i}|\\
    &\leq |\lambda_{i}-\overline{\lambda}_{i}|+|\overline{\lambda}_{i}-\widehat{\lambda}_{i}|\\
    &\leq\epsilon\sqrt{n}\rho(L_{M})+\frac{\Delta_{M}-\delta_{M}}{2}.
    \end{align*}

The fact that $\widehat{M}$ has $\mathcal O(\frac{n}{\epsilon^{2}})$ non-zero elements follow directly from the Lee-Sun theorem. Finally, the last part of the theorem is obtained by means of the Davis-Kahan theorem
    \[
    \sin\theta_{i}\leq \frac{\norm{M-\widehat{M}}}{\min\{|\widehat{\lambda}_{i-1}-\lambda_{i}|, \ |\lambda_{i}-\widehat{\lambda}_{i+1}|\}} \leq \frac{\epsilon\sqrt{n}\rho(L_{M})+(\Delta_{M}-\delta_{M})/2}{\min\{|\widehat{\lambda}_{i-1}-\lambda_{i}|, \ |\lambda_{i}-\widehat{\lambda}_{i+1}|\}}.
    \]

\end{document}